\newtheorem{theorem}{Theorem}[section]
\newtheorem{lemma}[theorem]{Lemma}
\begin{document}

\title{Cascade Multiterminal Source Coding}

\author{
\authorblockN{Paul Cuff, Han-I Su, and Abbas El Gamal}
\authorblockA{Department of Electrical Engineering\\
Stanford University\\
E-mail: \{cuff, hanisu, abbas\}@stanford.edu }
}

\maketitle

\begin{abstract}
We investigate distributed source coding of two correlated sources $X$ and $Y$ where messages are passed to a decoder in a cascade fashion.  The encoder of $X$ sends a message at rate $R_1$ to the encoder of $Y$.  The encoder of $Y$ then sends a message to the decoder at rate $R_2$ based both on $Y$ and on the message it received about $X$.  The decoder's task is to estimate a function of $X$ and $Y$.  For example, we consider the minimum mean squared-error distortion when encoding the sum of jointly Gaussian random variables under these constraints.  We also characterize the rates needed to reconstruct a function of $X$ and $Y$ losslessly.

Our general contribution toward understanding the limits of the cascade multiterminal source coding network is in the form of inner and outer bounds on the achievable rate region for satisfying a distortion constraint for an arbitrary distortion function $d(x,y,z)$.  The inner bound makes use of a balance between two encoding tactics---relaying the information about $X$ and recompressing the information about $X$ jointly with $Y$.  In the Gaussian case, a threshold is discovered for identifying which of the two extreme strategies optimizes the inner bound.  Relaying outperforms recompressing the sum at the relay for some rate pairs if the variance of $X$ is greater than the variance of $Y$.
\end{abstract}

\section{Introduction}
\label{section introduction}

Distributed data collection, such as aggregating measurements in a sensor network, has been investigated from many angles \cite{Giridhar--Kumar2006}.  Various algorithms exist for passing messages to neighbors in order to collect information or compute functions of data.  Here we join in the investigation of the minimum descriptions needed to quantize and collect data in a network, and we do so by studying a particular small network.  These results provide insight for optimal communication strategies in larger networks.

In the network considered here, two sources of information are to be described by separate encoders and passed to a single decoder in a cascade fashion.  That is, after receiving a message from the first encoder, the second encoder creates a final message that summarizes the information available about both sources and sends it to the decoder.  We refer to this setup as the {\em cascade multiterminal source coding} network, shown in Figure \ref{figure cascade multiterminal}.  Discrete i.i.d. sources $X_i \in {\cal X}$ and $Y_i \in {\cal Y}$ are jointly distributed according to the probability mass function $p_0(x,y)$.  Encoder 1 summarizes a block of $n$ symbols $X^n$ with a message $I \in \{1,...,2^{nR_1}\}$ and sends it to Encoder 2.  After receiving the message, Encoder 2 sends an index $J \in \{1,...,2^{nR_2}\}$ to describe what it knows about both sources to the decoder, based on the message $I$ and on the observations $Y^n$.  The decoder then uses the index $J$ to construct a sequence $Z^n$, where each $Z_i$ is an estimate of a desired function of $X_i$ and $Y_i$.

\begin{figure}
\psfrag{l1}[][][0.7]{Encoder 1}
\psfrag{l2}[][][0.7]{Encoder 2}
\psfrag{l3}[][][0.7]{Decoder}
\psfrag{l4}[][][0.9]{$X^n$}
\psfrag{l5}[][][0.9]{$Y^n$}
\psfrag{l6}[][][0.9]{$Z^n$}
\psfrag{l7}[][][0.8]{$i(X^n)$}
\psfrag{l8}[][][0.8]{$j(I,Y^n)$}
\psfrag{l9}[][][0.8]{$z^n(J)$}
\psfrag{l11}[][][0.8]{$I \in [2^{nR_1}]$}
\psfrag{l12}[][][0.8]{$J \in [2^{nR_2}]$}
\centering
\includegraphics[width=3.0in]{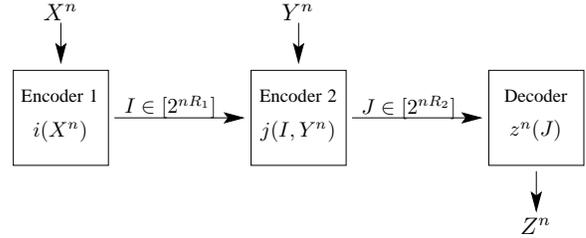}
\caption{{\em Cascade Multiterminal Source Coding.}  The i.i.d. source sequences $X_1,...,X_n$ and $Y_1,...,Y_n$ are jointly distributed according to $p_0(x,y)$.  Encoder 1 sends a message $I$ about the sequence $X_1,...,X_n$ at rate $R_1$ to Encoder 2.  The second encoder then sends a message $J$ about both source sequences at rate $R_2$ to the decoder.  We investigate the rates required to produce a sequence $Z_1,...,Z_n$ with various goals in mind, such as reconstructing estimates of $X^n$ or $Y^n$ or a function of the two.}
\label{figure cascade multiterminal}
\end{figure}

For example, consider the lossless case.  Suppose we wish to compute a function of $X$ and $Y$ in the cascade multiterminal source coding network.  What rates are needed to reliably calculate $Z_i = f(X_i,Y_i)$ at the decoder?  Computing functions of observations in a network has been considered in various other settings, such as the two-node back-and-forth setting of \cite{Ma--Ishwar2008} and the multiple access channel setting in \cite{Nazer--Gastpar2007}.  In the cascade multiterminal network, the answer breaks down quite intuitively.  For the message from Encoder 1 to Encoder 2, use Wyner-Ziv encoding \cite{WynerZiv76} to communicate the function values.  Then apply lossless compression to the function values at Encoder 2.  Computing functions of data in a Wyner-Ziv setting was introduced by Yamamoto \cite{Yamamoto82}, and the optimal rate for lossless computation was shown by Orlitsky and Roche \cite{Orlitsky01} to be the conditional graph entropy on an appropriate graph.

A particular function for which the optimal rates are easy to identify is the encoding of binary sums of binary symmetric $X$ and $Y$ that are equal with probability $p$, as proposed by Korner and Marton \cite{Korner79}.  For this computation, the required rates are $R_1 \geq h(p)$ and $R_2 \geq h(p)$, where $h$ is the binary entropy function.  Curiously, the same rates are required in the standard multiterminal source coding setting.

Encoding of information sources at separate encoders has attracted a lot of attention in the information theory community over the years.  The results of Slepian-Wolf encoding and communication through the Multiple Access Channel (MAC) are surprising and encouraging.  Slepian and Wolf \cite{Slepian_wolf_73_source_coding} showed that separate encoders can compress correlated sources losslessly at the same rate as a single encoder.  Ahlswede \cite{Ahlswede73MAC} and Liao \cite{Liao} fully characterized the capacity region for the general memoryless MAC, making it the only multi-user memoryless channel setting that is solved in its full generality.  Thus, the feasibility of describing two independent data sources without loss through a noisy channel with interference to a single decoder is solved.

Beyond the two cases mentioned, slight variations to the scenario result in a multitude of open problems in distributed source coding.  For example, the feasibility of describing two correlated data sources through a noisy MAC is not solved.  Furthermore, allowing the source coding to be done with loss raises even more uncertainty.  Berger and Tung \cite{Berger78} first considered the multiterminal source coding problem, where correlated sources are encoded separately with loss.  Even when no noisy channel is involved, the optimal rate region is not known, but ongoing progress continues \cite{Wagner--Tavildar--Viswanath2008} \cite{Krithivasan}.

The cascade multiterminal source coding setting is similar to multiterminal source coding considered by Berger and Tung in that two sources of information are encoded in a distributed fashion with loss.  The difference is that communication between the source encoders in this network replaces one of the direct channels to the decoder.  Thus, joint encoding is enabled to a degree, but the down side is that any message from Encoder 1 to the Decoder must now cross two links.

The general cascade multiterminal source coding problem includes many interesting variations.  The decoder may need to estimate both $X$ and $Y$, $X$ only, $Y$ only, or some other function of both, such as the sum of two jointly Gaussian random variables, considered in Section \ref{subsection gaussian}.  Vasudevan, Tian, and Diggavi \cite{Vasudevan} looked at a similar cascade communication system with a relay.  In their setting, the decoder has side information, and the relay has access to a physically degraded version of it.  Because of the degradation, the decoder knows everything it needs about the relay's side information, so the relay does not face the dilemma of mixing in some of the side information into its outgoing message.  In the cascade multiterminal source coding setting of this paper, the decoder does not have side information.  Thus, the relay is faced with coalescing the two pieces of information into a single message.  Other research involving similar network settings can be found in \cite{Gu06}, where Gu and Effros consider a more general network but with the restriction that the information $Y$ is a function of the information $X$, and \cite{Bakshi07}, where Bakshi et. al. identify the optimal rate region for lossless encoding of independent sources in a longer cascade (line) network.

In this paper we present inner and outer bounds on the general rate-distortion region for the cascade multiterminal source coding problem.  The inner bound addresses the challenge of compressing a sequence that is itself the result of a lossy compression.  Then we consider specific cases, such as encoding the sum of jointly Gaussian random variables, computing functions, and even coordinating actions.  The bounds are tight for computing functions and achieving some types of coordinated actions.

\section{Problem Specifics}
\label{section problem specifics}

The encoding of source symbols into messages is described in detail in the introduction and is depicted in Figure \ref{figure cascade multiterminal}.

\subsection{Objective}
\label{subsection objective}

The goal is for $X^n$, $Y^n$, and $Z^n$ to satisfy an average letter-by-letter distortion constraint $D$ with high probability.  A finite distortion function $d(x,y,z)$ specifies the penalty incurred for any triple $(x,y,z)$.  Therefore, the objective is to reliably produce a sequence $Z^n$ that satisfies
\begin{eqnarray}
\frac{1}{n} \sum_{i=1}^n d(X_i,Y_i,Z_i) & \leq & D.
\end{eqnarray}

Due to the flexibility in defining the distortion function $d$, the decoded sequence $Z^n$ can play a number of different roles.  If the goal is to estimate both sources $X$ and $Y$ with a distortion constraint, then $Z = (\hat{X}, \hat{Y})$ encompasses both estimates, and $d$ can be defined accordingly.  Alternatively, the decoder may only need to estimate $X$, in which case $Y$ acts as side information at a relay, and $Z = \hat{X}$.  In general, the decoder could produce estimates of any function of $X$ and $Y$.

\subsection{Rate-Distortion Region}
\label{subsection region}

The triple $(R_1,R_2,D)$ is an achievable rate-distortion triple for the distortion function $d$ and source distribution $p_0(x,y)$ if the following holds:
\begin{eqnarray*}
\mbox{For } \forall \epsilon > 0, \\
\exists n & \in & \{1,2,...\}, \\
\exists i & : & {\cal X}^n \rightarrow \{1,...,2^{nR_1}\}, \\
\exists j & : & {\cal Y}^n \times \{1,...,2^{nR_1}\} \rightarrow \{1,...,2^{nR_2}\}, \\
\exists z^n & : & \{1,...,2^{nR_2}\} \rightarrow {\cal Z}^n, \\
\mbox{such that} & & \\
& {\mathbb Pr} & \left(\frac{1}{n} \sum_{i=1}^n d(X_i,Y_i,Z_i) > D \right) \; < \; \epsilon, \\
\mbox{where} & & \\
Z^n & = & z^n(j(Y^n,i(X^n))).
\end{eqnarray*}

The rate-distortion region ${\cal R}$ for a particular source joint distribution $p_0(x,y)$ and distortion function $d$ is the closure of achievable rate-distortion triples, given as,
\begin{eqnarray}
{\cal R} & \triangleq & {\mathbb Cl} \{ \mbox{achievable } (R_1,R_2,D) \mbox{ triples} \}.
\end{eqnarray}

\section{General Inner Bound}
\label{section inner bound}

\begin{figure*}[b]
% The spacer can be tweaked to stop underfull vboxes.
\vspace*{4pt}
% IEEE uses as a separator
\hrulefill
\begin{eqnarray}
\label{equation inner bound} {\cal R}_{in} & \triangleq & \left\{ (R_1,R_2,D) \; : \;
\begin{array}{l}
\exists p(x,y,z,u,v) = p_0(x,y)p(u,v|x)p(z|y,u,v) \mbox{ such that} \\
D > E(d(X,Y,Z)), \\
R_1 > I(X;U,V|Y), \\
R_2 > I(X;U) + I(Y,V;Z|U).
\end{array}
\right\}.
\end{eqnarray}
\end{figure*}

\begin{figure*}[b]
\begin{eqnarray}
\label{equation outer bound} {\cal R}_{out} & \triangleq & \left\{ (R_1,R_2,D) \; : \;
\begin{array}{l}
\exists p(x,y,z,u) = p_0(x,y)p(u|x)p(z|y,u) \mbox{ such that} \\
D \geq E(d(X,Y,Z)), \\
R_1 \geq I(X;U|Y), \\
R_2 \geq I(X,Y;Z).
\end{array}
\right\}.
\end{eqnarray}
\end{figure*}

The cascade multiterminal source coding network presents an interesting dilemma.  Encoder 2 has to summarize both the source sequence $Y^n$ and the message $I$ that describes $X^n$.  Intuition from related information theory problems, like Wyner-Ziv coding, suggests that for efficient communication the message $I$ from Encoder 1 to Encoder 2 will result in a phantom sequence of auxiliary random variables that are jointly typical with $X^n$ and $Y^n$ according to a selected joint distribution.  The second encoder could jointly compress the source sequence $Y^n$ along with the auxiliary sequence, treating it as if it was also a random source sequence.  But this is too crude.  A lot is known about the auxiliary sequence, such as the codebook it came from, allowing it to be summarized more easily than this approach would allow.  In some situations it proves more efficient to simply pass the description from Encoder 1 straight to the Decoder rather than to treat it as a random source and recompress at the second encoder.

While still allowing the message $I$ from Encoder 1 to be associated with a codebook of auxiliary sequences, we would like to take advantage of the sparsity of the codebook as we form a description at Encoder 2.  One way to accommodate this is to split the message from Encoder 1 into two parts.  One part is forwarded by Encoder 2, and the other part is decoded by Encoder 2 into a sequence of auxiliary variables and compressed with $Y^n$ as if it were a random source sequence.  The forwarded message keeps its sparse codebook in tact, while the decoded and recompressed message enjoys the efficiency that comes with being bundled with $Y$.  This results in an inner bound ${\cal R}_{in}$ for the rate-distortion region ${\cal R}$.  The definition of ${\cal R}_{in}$ is found in (\ref{equation inner bound}) at the bottom of this page.  The region ${\cal R}_{in}$ is already convex (for fixed $p_0(x,y)$ and $d$), so there is no need to convexify using time-sharing.

\begin{theorem}[Inner bound]
\label{theorem relay}
The rate-distortion region ${\cal R}$ for the cascade multiterminal source coding network of Figure \ref{figure cascade multiterminal} contains the region ${\cal R}_{in}$.  Every rate-distortion triple in ${\cal R}_{in}$ is achievable.  That is,
\begin{eqnarray}
{\cal R} & \supset & {\cal R}_{in}.
\end{eqnarray}
\end{theorem}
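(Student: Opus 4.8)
The plan is to prove achievability by a random coding argument that combines Wyner--Ziv binning on the first link with a hybrid of \emph{forwarding} and \emph{recompression} on the second link, exactly the two tactics flagged in the discussion preceding the theorem. Fix a conditional law $p(u,v|x)p(z|y,u,v)$ achieving a point of ${\cal R}_{in}$ and a small $\delta>0$. I would build a two-layer auxiliary codebook: generate $2^{nR_U}$ sequences $u^n(k)$ i.i.d. $\sim \prod p(u_i)$ with $R_U=I(X;U)+\delta$, and for each $k$ a set of $2^{nR_V}$ sequences $v^n(k,\ell)$ i.i.d. $\sim \prod p(v_i|u_i(k))$ with $R_V=I(X;V|U)+\delta$; then for each $k$ generate $2^{nR_Z}$ reconstruction sequences $z^n(k,m)$ i.i.d. $\sim \prod p(z_i|u_i(k))$ with $R_Z=I(Y,V;Z|U)+\delta$. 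Finally, partition the $(k,\ell)$ pairs into $2^{nR_1}$ equal-size bins. Since ${\cal R}_{in}$ is already convex, no time-sharing variable is needed.

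Encoding would proceed as follows. Encoder 1, observing $X^n$, uses the covering lemma to find (with high probability) a pair $(k,\ell)$ with $(X^n,u^n(k),v^n(k,\ell))$ jointly typical; this succeeds because $R_U>I(X;U)$ and $R_V>I(X;V|U)$, so the two covering rates sum to $I(X;U,V)$. It transmits the bin index of $(k,\ell)$, costing rate $R_1$. Encoder 2, which knows $Y^n$, searches its received bin for the unique pair whose codewords are jointly typical with $Y^n$. By the Markov chain $(U,V)-X-Y$ and the conditional typicality lemma the transmitted pair is jointly typical with $Y^n$ whp, and the standard bin-decoding analysis shows uniqueness provided $R_U+R_V-R_1<I(Y;U,V)$, i.e. $R_1>I(X;U,V)-I(Y;U,V)=I(X;U,V|Y)$. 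Having recovered $(u^n(k),v^n(k,\ell))$, Encoder 2 then does two things: it \emph{forwards} the index $k$ verbatim, costing $R_U\approx I(X;U)$ with no side-information savings because the decoder has none, and it \emph{recompresses} by applying the covering lemma once more to find an $m$ with $(Y^n,v^n(k,\ell),u^n(k),z^n(k,m))$ jointly typical, which succeeds because $R_Z>I(Y,V;Z|U)$. It sends $(k,m)$, so $R_2=R_U+R_Z\approx I(X;U)+I(Y,V;Z|U)$.

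The decoder reads $(k,m)$ and sets $Z^n=z^n(k,m)$; no binning is needed here since $k$ was sent in the clear, and forwarding $k$ is what lets the $Z$-codebook be conditioned on $u^n(k)$ and hence enjoy the reduced covering rate $I(Y,V;Z|U)$. For the distortion bound I would argue that, on the intersection of the (high-probability) covering and bin-decoding success events, the full tuple $(X^n,Y^n,u^n(k),v^n(k,\ell),z^n(k,m))$ is jointly typical under $p_0(x,y)p(u,v|x)p(z|y,u,v)$. The one nontrivial gluing step is tying $Z^n$ back to $X^n$: since $z^n(k,m)$ is chosen using only $(Y^n,u^n,v^n)$ and is generated $\sim\prod p(z_i|u_i)$, the Markov lemma, invoking the chain $Z-(Y,U,V)-X$ built into the target law, certifies that $X^n$ stays jointly typical with $(Y^n,U^n,V^n,Z^n)$. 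Joint typicality and boundedness of $d$ then give $\frac{1}{n}\sum_i d(X_i,Y_i,Z_i)\le E(d(X,Y,Z))+\delta' \le D$ whp; letting $\delta\to 0$ and extracting a good deterministic codebook finishes the argument.

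The main obstacle I anticipate is the simultaneous rate accounting on the second link. The same layered codebook must support Wyner--Ziv bin-decoding at Encoder 2, which exploits $Y^n$ to recover $(U,V)$ cheaply at the conditional rate $I(X;U,V|Y)$, while afterward the $U$-layer index must be forwarded to the decoder at its \emph{unconditional} rate $I(X;U)$, with only the $V$-layer absorbed into the recompression term $I(Y,V;Z|U)$. Getting the two-layer covering and binning rates to split exactly as $R_U=I(X;U)$, $R_V=I(X;V|U)$, and verifying that the forward-plus-recompress decomposition yields precisely $R_2=I(X;U)+I(Y,V;Z|U)$ rather than the cruder sum $I(X,V;U)+I(Y,V;Z)$, is the crux of the bound; the remaining error-event bookkeeping and the Markov-lemma step are routine.
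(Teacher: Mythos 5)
Your proposal matches the paper's proof in all essentials: the same layered codebooks $\{u^n\}$, $\{v^n\}$, $\{z^n\}$ at rates $I(X;U)+\delta$, $I(X;V|U)+\delta$, $I(Y,V;Z|U)+\delta$, Wyner--Ziv binning decoded at Encoder 2 using $Y^n$ to get $R_1>I(X;U,V|Y)$, forwarding the $U$-index in the clear while recompressing $(Y,V)$ into $Z$ conditioned on $U$ to get $R_2>I(X;U)+I(Y,V;Z|U)$, and the Markov lemma to certify joint typicality of $(X^n,Y^n,Z^n)$. The only cosmetic difference is that you bin the pairs $(k,\ell)$ jointly whereas the paper bins the $U$- and $V$-codewords separately and decodes them successively; both yield the same rate constraint.
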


\begin{proof}
For lack of space, we give only a description of the encoding and decoding strategies involved in the proof and skip the probability of error analysis.  We use familiar techniques of randomized codebook construction, jointly typical encoding, and binning.

For any rate-distortion triple in ${\cal R}_{in}$ there is an associated joint distribution $p(x,y,z,u,v)$ that satisfies the inequalities in (\ref{equation inner bound}).  Construct three sets of codebooks, ${\cal C}_U$, ${\cal C}_{V,i}$, and ${\cal C}_{Z,i}$, for $i = 1,2,...,|{\cal C}_U|$, where
\begin{eqnarray*}
{\cal C}_U & = & \{u^n(i)\}_{i=1}^{m_1}, \\%2^{n (I(X;U) + \epsilon)}}, \\
{\cal C}_{V,i} & = & \{v^n(j,i)\}_{j=1}^{m_2}, \\%2^{n (I(X;V|U) + \epsilon)}}, \\
{\cal C}_{Z,i} & = & \{z^n(k,i)\}_{k=1}^{m_3}.%2^{n (I(Y,V;Z|U) + \epsilon)}}.
\end{eqnarray*}
Let $m_1 = 2^{n (I(X;U) + \epsilon)}$, $m_2 = 2^{n (I(X;V|U) + \epsilon)}$, and $m_3 = 2^{n (I(Y,V;Z|U) + \epsilon)}$.

Randomly generate the sequences $u^n(i) \in {\cal C}_U$ i.i.d. according to $p(u)$, independent for each $i$.  Then for each $i$ and $j$, independently generate the sequences $v^n(j,i) \in {\cal C}_{V,i}$ conditioned on $u^n(i) \in {\cal C}_U$ symbol-by-symbol according to $p(v|u)$.  Similarly, for each $i$ and $k$, independently generate the sequences $z^n(k,i) \in {\cal C}_{Z,i}$ conditioned on $u^n(i) \in {\cal C}_U$ symbol-by-symbol according to $p(z|u)$.

Finally, assign bin numbers.  For every sequence $u^n(i) \in {\cal C}_U$ assign a random bin $b_U(i) \in \{1,...,2^{n(I(X;U|Y) + 2\epsilon)}\}$.  Also, for each $i$ and each $v^n(j,i) \in {\cal C}_{V,i}$ assign a random bin $b_V(j,i) \in \{1,...,2^{n(I(X;V|Y,U) + 2\epsilon)}\}$.

Successful encoding and decoding is as follows.  Encoder 1 first finds a sequence $u^n(i) \in {\cal C}_U$ that is $\epsilon$-jointly typical with $X^n$ with respect to $p(x,u)$.  Then Encoder 1 finds a sequence $v^n(j,i) \in {\cal C}_{V,i}$ that is $\epsilon$-jointly typical with the pair $(X^n,u^n(i))$ with respect to $p(x,u,v)$.  Finally, Encoder 1 sends the bin numbers $b_U(i)$ and $b_V(j,i)$ to Encoder 2.

Encoder 2 considers all codewords in ${\cal C}_U$ with bin number $b_U(i)$ and finds that only $u^n(i)$ is $\epsilon$-jointly typical with $Y^n$ with respect to $p(y,u)$.  Then Encoder 2 considers all codewords in ${\cal C}_{V,i}$ with bin number $b_V(j,i)$ and finds that only $v^n(j,i)$ is $\epsilon$-jointly typical with the pair $(Y^n,u^n(i))$ with respect to $p(y,u,v)$.  Finally, Encoder 2 finds a sequence $z^n(k,i) \in {\cal C}_{Z,i}$ that is $\epsilon$-jointly typical with the triple $(Y^n,u^n(i),v^n(j,i))$ with respect to $p(y,u,v,z)$ and sends both $i$ and $k$ to the Decoder.

The decoder produces $Z^n = z^n(k,i)$.  Due to the Markov Lemma \cite{Berger78} and the structure of $p(x,y,z,u,v)$, the triple $(X^n,Y^n,Z^n)$ will be $\epsilon$-jointly typical with high probability.  Finally, $\epsilon$ can be chosen small enough to satisfy the rate and distortion inequalities.
\end{proof}

\section{General Outer Bound}
\label{section outer bound}

\begin{theorem}[Outer bound]
\label{theorem relay}
The rate-distortion region ${\cal R}$ for the cascade multiterminal source coding network of Figure \ref{figure cascade multiterminal} is contained in the region ${\cal R}_{out}$ defined in (\ref{equation outer bound}).  Rate-distortion triples outside of ${\cal R}_{out}$ are not achievable.  That is,
\begin{eqnarray}
{\cal R} & \subset & {\cal R}_{out}.
\end{eqnarray}
\end{theorem}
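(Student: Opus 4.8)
The plan is to take an arbitrary achievable triple $(R_1,R_2,D)$, fix a block-length-$n$ code $(i,j,z^n)$ that achieves it, and single-letterize the two rate constraints after identifying a suitable auxiliary variable. I would work throughout from the functional relationships $I = i(X^n)$, $J = j(Y^n,I)$, and $Z^n = z^n(J)$, together with the memorylessness of $p_0(x,y)$. For the first rate, start from $nR_1 \geq H(I) \geq H(I \mid Y^n) = I(X^n; I \mid Y^n)$, where the last equality holds because $I$ is a deterministic function of $X^n$. For the second rate, use $nR_2 \geq H(J) \geq I(X^n,Y^n; J) \geq I(X^n,Y^n; Z^n)$, where the final inequality is the data-processing inequality applied to $Z^n = z^n(J)$.

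Next I would introduce the auxiliary $U_i \triangleq (I, Y^{i-1}, Y_{i+1}^n)$, chosen precisely so that $(Y_i, U_i) = (Y^n, I)$. Expanding $H(X^n \mid Y^n) = \sum_i H(X_i \mid Y_i)$ by the i.i.d. assumption and dropping $X^{i-1}$ from the conditioning in $H(X^n \mid I, Y^n) = \sum_i H(X_i \mid X^{i-1}, I, Y^n)$, the first bound telescopes to $I(X^n; I \mid Y^n) \geq \sum_i I(X_i; U_i \mid Y_i)$. An analogous expansion, again using memorylessness and dropping conditioning, gives $I(X^n,Y^n; Z^n) \geq \sum_i I(X_i, Y_i; Z_i)$.

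The crucial step, and the one I expect to be the main obstacle, is verifying that this single choice of $U_i$ delivers the required factorization $p(x,y,z,u) = p_0(x,y)\,p(u \mid x)\,p(z \mid y,u)$, equivalently the two Markov chains $U_i - X_i - Y_i$ and $X_i - (Y_i, U_i) - Z_i$. The second chain is immediate: $Z_i$ is a component of $z^n(j(Y^n,I))$ and hence a deterministic function of $(Y^n,I) = (Y_i, U_i)$, so $Z_i \perp X_i \mid (Y_i, U_i)$. The first chain is where memorylessness is essential and where care is needed, since $U_i$ literally contains $Y^{i-1}$ and $Y_{i+1}^n$: conditioned on $X_i$, the variable $U_i$ is a function of $X_i$ and of $\{(X_k,Y_k)\}_{k \neq i}$, and because the latter collection is independent of $(X_i,Y_i)$ we obtain $U_i \perp Y_i \mid X_i$. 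Thus only the conditional independence, not literal functional dependence on $X$ alone, is what the factorization $p(u\mid x)$ requires.

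Finally I would introduce a time-sharing variable $Q$ uniform on $\{1,\dots,n\}$ and independent of the sources, and set $X = X_Q$, $Y = Y_Q$, $Z = Z_Q$, $U = (U_Q, Q)$. Since $(X_i,Y_i)$ are identically distributed, $Q$ is independent of $(X_Q,Y_Q)$, so $\frac1n \sum_i I(X_i; U_i \mid Y_i) = I(X; U \mid Y)$ and $\frac1n \sum_i I(X_i,Y_i; Z_i) = I(X,Y;Z \mid Q) \geq I(X,Y;Z)$, while both Markov chains pass to $(X,Y,Z,U)$. For the distortion, since $d$ is finite and bounded by some $d_{\max}$, the high-probability constraint yields $E[d(X,Y,Z)] = \frac1n \sum_i E[d(X_i,Y_i,Z_i)] \leq D + \epsilon\, d_{\max}$; letting $\epsilon \to 0$ and invoking the closure in the definition of $\mathcal{R}$ then places $(R_1,R_2,D)$ in $\mathcal{R}_{out}$, completing the containment $\mathcal{R} \subset \mathcal{R}_{out}$.
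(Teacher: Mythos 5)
Your proposal is correct and follows exactly the route the paper intends: the paper's one-line proof sketch says to identify $U$ as the message $I$ together with the past and future of $Y^n$, which is precisely your choice $U_i = (I, Y^{i-1}, Y_{i+1}^n)$, and your single-letterization, verification of the two Markov chains, time-sharing step, and distortion bound correctly fill in the standard details the paper omits. No gaps.
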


\begin{proof}
Identify the message $I$ from Encoder 1 along with the past and future variables in the sequence $Y^n$ as the auxiliary random variable $U$.
\end{proof}

\section{Special Cases}
\label{section special cases}

\subsection{Sum of Jointly Gaussian}
\label{subsection gaussian}
Suppose we wish to encode two jointly Gaussian data sources at Encoder 1 and Encoder 2 in order to produce an estimate of the sum at the decoder with small mean squared-error distortion.  Let $X$ and $Y$ be zero-mean jointly Gaussian random variables, where $X$ has variance $P_X$, $Y$ has variance $P_Y$, and their correlation coefficient is $\rho = \frac{E(XY)}{\sigma_X \sigma_Y}$.

\subsubsection{Inner bound}

We can explore the region ${\cal R}_{in}$ by optimizing over jointly Gaussian random variables $U$, $V$, and $Z$ to find achievable rate-distortion triples $(R_1,R_2,D)$.  This restricted search might not find all extremal rate-distortion points in ${\cal R}_{in}$; still it provides an inner bound on the rate-distortion region.  \footnote{To perform this optimization, first note that the marginal distribution $p(x,y,u)$ determines the quantities $I(X;U)$ and $I(X;U|Y)$, and $p(x,y,u)$ only has one significant free parameter due to Markovity.  All remaining quantities that define the region ${\cal R}_{in}$ are conditioned on $U$, including the final estimate at the decoder since $U$ is available to the decoder.  Therefore, after fixing $p(x,y,u)$ we can remove $U$ entirely from the optimization problem by exploiting the idiosyncracies of the jointly Gaussian distribution.  Namely, reduce the rates $R_1$ and $R_2$ appropriately and solve the problem without $U$ with $\bar{X}$ replacing $X$ and $\bar{Y}$ replacing $Y$, where $\bar{X}$ is the error in estimating $X$ with $U$, and $\bar{Y}$ is the error in estimating $Y$ with $U$.  This greatly reduces the dimensionality of the problem.}

The optimization of ${\cal R}_{in}$ with the restriction of only considering jointly Gaussian distributions $p(x,y,z,u,v)$ leads to two contrasting strategies depending on the variances $P_X$ and $P_Y$ of the sources and the rate $R_1$.  The two encoding strategies employed are to either forward the message from Encoder 1 to the Decoder, or to use the message to construct an estimate $\hat{X}^n$ at Encoder 2 and then compress the vector sum $\hat{X}^n + Y^n$ and send it to the Decoder, but not both.  In other words, either let $V=\emptyset$ ({\em forward} only) or let $U=\emptyset$ ({\em recompress} only).  The determining factor for deciding which method to use is a comparison of the rate $R_1$ with the quantity $\frac{1}{2} \log_2 \frac{P_X}{P_Y}$.

{\em Case 1:  (Recompress)}
\begin{eqnarray*}
R_1 & \geq & \frac{1}{2} \log_2 \frac{P_X}{P_Y}.
\end{eqnarray*}
If the rate $R_1$ is large enough, then the optimal encoding method is to recompress at Encoder 2.  This will allow for a more efficient encoding of the sum in the second message $J$ rather than encoding two components of the estimate separately.

The distortion in this case is
\begin{eqnarray}
D & = & (1 - \rho^2) \left( 1 - 2^{-2 R_2} \right) 2^{-2 R_1} P_X \nonumber \\
& & + \;\; 2^{-2 R_2} P_{X+Y}, \label{equation gaussian recompress}
\end{eqnarray}
where $P_{X+Y}$ is the variance of the sum $X+Y$.

{\em Case 2:  (Forward)}
\begin{eqnarray*}
R_1 & < & \frac{1}{2} \log_2 \frac{P_X}{P_Y}.
\end{eqnarray*}
If the variance of $X$ is larger than $Y$ and the rate $R_1$ is small, then the optimal encoding method is to forward the message $I$ from Encoder 1 to the Decoder without changing it.  By rearranging the inequality, we see that $2^{-2 R_1} P_X > P_Y$.  From rate-distortion theory we know that $2^{-2 R_1} P_X$ is the mean squared-error that results from compressing $X$ at rate $R_1$.  The fact that the variance of the error introduced by the compression at Encoder 1 is larger than the variance of $Y$ subtly indicates that the description of $X$ was more efficiently compressed by Encoder 1 than it would be if mixed with $Y$ and recompressed.

The estimate of $X$ from Encoder 1, represented by $U$, which is forwarded by Encoder 2, might be limited by either $R_1$ or $R_2$.  In the case that $R_2$ is completely saturated with the description of $U$ at rate $I(X;U)$, there is no use trying to use any excess rate $R_1 - I(X;U|Y)$ from Encoder 1 to Encoder 2 because it will have no way of reaching the decoder. On the other hand, in the case that $R_1$ is the limiting factor for the description of $U$ at rate $I(X;U|Y)$, then the excess rate $R_2 - I(X;U)$ can be used to describe $Y$ to the decoder.  We state the distortion separately for each of these cases.

If $R_2 \leq \frac{1}{2} \log_2 \left( \frac{2^{2 R_1} - \rho^2}{1 - \rho^2} \right)$ then,
\begin{eqnarray*}
D & = & 2^{-2 R_2} \left( P_{X+Y} + (1 - \rho^2) \left( 2^{2 R_2} - 1 \right) P_Y \right).
\end{eqnarray*}
If $R_2 > \frac{1}{2} \log_2 \left( \frac{2^{2 R_1} - \rho^2}{1 - \rho^2} \right)$ then,
\begin{eqnarray*}
D & = & \left( (1-\rho^2) 2^{-2 R_1} - \left( 1 - \rho^2 2^{-2 R_1} \right) 2^{-2 R_2} \right) P_X \\
& & + \;\; 2^{-2 R_2} \left( P_{X+Y} + \left( 2^{2 R_1} - 1 \right) P_Y \right).
\end{eqnarray*}
Again, $P_{X+Y}$ is the variance of the sum $X+Y$.

\subsubsection{Outer bound}

The outer bound ${\cal R}_{out}$ is optimized with Gaussian auxiliary random variables.  However, for simplicity, we optimize an even looser bound by minimizing $R_1$ and $R_2$ separately (cut-set bound) for a given distortion constraint.  The result is the following lower bound on distortion.
\begin{eqnarray}
D & \geq & \max \{ 2^{-2R_1} (1-\rho^2) P_X,  \; \; 2^{-2R_2} P_{X+Y} \}. \label{equation gaussian outer bound}
\end{eqnarray}

\subsubsection{Sum-Rate}

Consider the sum-rate $R_1 + R_2$ required to achieve a given distortion level $D$.  We can compare the sum-rate-distortion function $R(D)$ for the inner and outer bounds.

Let $P_X \leq P_Y$.  This puts us in the recompress regime of the inner bound.  By optimizing (\ref{equation gaussian recompress}) subject to $R_1 + R_2 = R$, we find that the optimal values $R_1^*$ and $R_2^*$ satisfy
\begin{eqnarray*}
R_2^* - R_1^* & = & \frac{1}{2} \log_2 \left( \frac{P_{X+Y}}{(1-\rho^2) P_X} \right),
\end{eqnarray*}
as long as $R$ is greater than the right-hand side.  Notice that $R_2$ is more useful than $R_1$, as we might expect.  From this we find a piece-wise upper bound on the sum-rate-distortion function.  Similarly we find a piece-wise lower bound based on (\ref{equation gaussian outer bound}).

{\em Sum-rate upper bound.}  Low distortion region:
\begin{eqnarray*}
D & \leq & (1-\rho^2) P_X \left( 2 - \frac{(1-\rho^2) P_X}{P_{X+Y}} \right),
\end{eqnarray*}
then
\begin{eqnarray*}
R(D) & \leq & \frac{1}{2} \log_2 \left( \frac{P_{X+Y}}{D} \right) + \frac{1}{2} \log_2 \left( \frac{(1-\rho^2) P_X}{D} \right) \\
& & + \;\; \log_2 \left( 1 + \sqrt{1 - \frac{D}{P_{X+Y}}} \right).
\end{eqnarray*}
High distortion region: (up to $D \leq P_{X+Y}$)
\begin{eqnarray*}
R(D) & \leq & \frac{1}{2} \log_2 \left( \frac{P_{X+Y} - (1-\rho^2) P_X}{D - (1-\rho^2) P_X} \right).
\end{eqnarray*}

{\em Sum-rate lower bound.}  Low distortion region:
\begin{eqnarray*}
D & \leq & (1-\rho^2) P_X,
\end{eqnarray*}
then
\begin{eqnarray*}
R(D) & \geq & \frac{1}{2} \log_2 \left( \frac{P_{X+Y}}{D} \right) + \frac{1}{2} \log_2 \left( \frac{(1-\rho^2) P_X}{D} \right).
\end{eqnarray*}
High distortion region: (up to $D \leq P_{X+Y}$)
\begin{eqnarray*}
R(D) & \geq & \frac{1}{2} \log_2 \left( \frac{P_{X+Y}}{D} \right).
\end{eqnarray*}

\begin{lemma}
\label{lemma 1 bit gap}
The gap between the upper and lower bounds on the optimal sum-rate (derived from ${\cal R}_{in}$ and ${\cal R}_{out}$) needed to encode the sum of jointly Gaussian sources in the cascade multiterminal network with a squared-error distortion constraint $D$ is no more than 1 bit, shrinking as $D$ increases, for any jointly Gaussian sources satisfying $P_X \leq P_Y$.
\end{lemma}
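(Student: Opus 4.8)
The plan is to treat Lemma~\ref{lemma 1 bit gap} as a direct comparison of the two closed-form piecewise sum-rate bounds derived just above. The hypothesis $P_X \le P_Y$ places us in the recompress regime (since then $\tfrac12\log_2(P_X/P_Y)\le 0\le R_1$), so the stated sum-rate upper bound applies. Abbreviate $A \triangleq (1-\rho^2)P_X$ and $P \triangleq P_{X+Y}$, and let $\Delta(D) \triangleq R_{\mathrm{up}}(D) - R_{\mathrm{low}}(D)$ be the gap between the upper bound (from ${\cal R}_{in}$) and the lower bound (from ${\cal R}_{out}$). The lower bound switches formula at $D=A$ and the upper bound at $D=B \triangleq A(2-A/P)$. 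First I would fix the ordering of these breakpoints using the identity
\[
P - A = \left(\rho\sqrt{P_X} + \sqrt{P_Y}\right)^2 \ge 0,
\]
where $P_X \le P_Y$ keeps the inequality strict and the regions nondegenerate; then $0<A<P$ yields $2-A/P>1$, so $B>A$, while $P^2-BP=(P-A)^2>0$ gives $B<P$. Hence $(0,P]$ splits into $(0,A]$, $(A,B]$, $(B,P]$, on each of which both bounds are single analytic formulas, and I would compute $\Delta$ on each.

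The decisive region is the low-distortion one $(0,A]$, where both bounds use their low-distortion formulas. The two $\tfrac12\log_2(P/D)$ terms and the two $\tfrac12\log_2(A/D)$ terms cancel exactly, leaving
\[
\Delta(D) = \log_2\!\left(1 + \sqrt{1 - D/P}\right).
\]
Since $D\in(0,P]$ forces $\sqrt{1-D/P}\in[0,1)$, this lies in $[0,1)$ and is strictly decreasing, so $\Delta<1$ with the value $1$ approached only as $D\to 0^+$. This already delivers the $1$-bit figure. For the middle region $(A,B]$ the upper bound is still low-distortion and the lower bound high-distortion; the $\tfrac12\log_2(P/D)$ terms again cancel, giving $\Delta(D)=\tfrac12\log_2(A/D)+\log_2(1+\sqrt{1-D/P})$, whose first term is negative because $D>A$, so $\Delta<1$ here too. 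For the high-distortion region $(B,P]$ both bounds are high-distortion and subtraction gives $\Delta(D)=\tfrac12\log_2\!\big((P-A)D/((D-A)P)\big)$, which vanishes at $D=P$ and, using $B-A=A(P-A)/P$, equals $\tfrac12\log_2(2-A/P)<\tfrac12$ at $D=B$.

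Finally I would stitch the pieces together to get the monotone ``shrinking'' claim. Continuity at $D=A$ is immediate, and at $D=B$ I would use $A/B=P/(2P-A)$ and $1-B/P=(P-A)^2/P^2$, so that $\sqrt{1-B/P}=(P-A)/P$, to check that both one-sided limits equal $\tfrac12\log_2(2-A/P)$. Monotonicity then follows because on each open region every term of $\Delta$ is individually decreasing in $D$ (the logarithms of $P/D$ and $A/D$, the term $\log_2(1+\sqrt{1-D/P})$, and the factor $D/(D-A)$ are all decreasing), so $\Delta'(D)<0$ throughout; combined with continuity this makes $\Delta$ decreasing on all of $(0,P]$ with supremum $1$ attained only in the limit $D\to 0^+$. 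I do not anticipate a genuine obstacle: the whole argument is bookkeeping across the three pieces, and the single point needing care is the exact cancellation in the low-distortion region that isolates the term $\log_2(1+\sqrt{1-D/P})$ responsible for the one-bit bound.
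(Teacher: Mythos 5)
Your proposal is correct and follows exactly the route the paper intends: the lemma is stated immediately after the piecewise sum-rate bounds precisely so that it follows by direct comparison of those formulas, which the paper leaves implicit and you carry out in full (the cancellation isolating $\log_2\bigl(1+\sqrt{1-D/P_{X+Y}}\bigr)<1$ in the low-distortion region, and the checks on the intermediate and high-distortion regions). Your region-by-region bookkeeping, breakpoint ordering, and monotonicity argument are all sound, so this is a valid and essentially identical (if more detailed) version of the paper's argument.
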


\subsection{Computing a Function}
\label{subsection function}

Instead of estimating a function of $X$ and $Y$, we might want to compute a function exactly.  Here we show that the bounds ${\cal R}_{in}$ and ${\cal R}_{out}$ are tight for this lossless case.\footnote{The optimal rate region for computing functions of data in the standard multiterminal source coding network is currently an open problem \cite{HanKobayashi87}.}  To do so, we consider an arbitrary point $(R_1,R_2,D) \in {\cal R}_{out}$ and its associated distribution $\hat{p}(x,y,z,u)$.  For the inner bound ${\cal R}_{in}$ we use the same distribution $\hat{p}$; however, let $U=\emptyset$ and $V$ take the role of $U$ from the outer bound.  Notice that the Markovity constraints are satisfied.  Now consider,
\begin{eqnarray*}
I(Y,V;Z) & = & H(Z) - H(Z|Y,V) \\
& = & H(Z) - H(Z|Y,V,X) \\
& = & H(Z) \\
& = & I(X,Y;Z),
\end{eqnarray*}
due to the Markovity constraint $X - (Y,V) - Z$ and the fact that $Z$ is a function of $X$ and $Y$.  Therefore, for this distribution $\hat{p}$, all of the inequalities in ${\cal R}_{in}$ are satisfied for the point $(R_1,R_2,D)$.

The outer bound ${\cal R}_{out}$ makes it clear that optimal encoding is achieved by using Wyner-Ziv encoding from Encoder 1 to compute the value of the function $Z$ at Encoder 2.  This optimization is carefully investigated in \cite{Orlitsky01} and equated to a graph entropy problem.  Then Encoder 2 compresses $Z$ to the entropy limit.

\subsection{Markov Coordination}
\label{subsection markov}

It is possible to talk about achieving a joint distribution of coordinated actions $p(x,y,z) = p_0(x,y)p(z|x,y)$ without referring to a distortion function, as in \cite{Cover07Permuter}.  Under some conditions of the joint distribution, the bounds ${\cal R}_{in}$ and ${\cal R}_{out}$ are tight.  One obvious condition is when $X$, $Y$, and $Z$ form the Markov chain $X-Y-Z$.  In this case, there is no need to send a message $I$ from Encoder 1, and the only requirement for achievability is that $R_2 \geq I(Y;Z)$.

Another class of joint distributions $p_0(x,y)p(z|x,y)$ for which the rate bounds are provably tight is all distributions forming the Markov chain $Y-X-Z$.  This encompasses the case where $Y$ is a function of $X$, as in \cite{Gu06}.  To prove that the bounds are tight, choose $U = Z$ and $V = \emptyset$ for ${\cal R}_{in}$.  We find that rate pairs satisfying
\begin{eqnarray}
R_1 & \geq & I(X;Z|Y), \\
R_2 & \geq & I(X;Z),
\end{eqnarray}
are achievable.  And all rate pairs in ${\cal R}_{out}$ satisfy these inequalities.

\section{Acknowledgment}
The authors would like to recognize Haim Permuter's various contributions and curiosity toward this work.  This work is supported by the National Science Foundation through grants CCF-0515303 and CCF-0635318.

\bibliographystyle{unsrt}
\bibliography{isit}

\end{document}